\theoremstyle{plain}
\providecommand{\myvec}[1]{\ensuremath{\boldsymbol{#1}}}
\providecommand{\jj}{\ensuremath{\myvec{j}}}
\providecommand{\kk}{\ensuremath{\myvec{k}}}
\providecommand{\nn}{\ensuremath{\myvec{n}}}
\providecommand{\llambda}{\ensuremath{\myvec{\lambda}}}
\providecommand{\oomega}{\ensuremath{\myvec{\omega}}}
\providecommand{\bbN}{\ensuremath{\mathbb{N}}}
\providecommand{\bbZ}{\ensuremath{\mathbb{Z}}}
\renewcommand{\vec}[1]{\ensuremath{\boldsymbol{#1}}}
\newcommand{\ket}[1]{\ensuremath{\left| #1 \right \rangle}}
\newcommand{\bra}[1]{\ensuremath{\left \langle #1 \right |}}
\newcommand{\bx}{\boldsymbol{x}}
\newcommand{\bn}{\boldsymbol{n}}
\newcommand{\btheta}{\boldsymbol \theta}
\newtheorem*{theorem*}{Theorem}
\begin{document}
\title{The effect of data encoding on the expressive power of variational quantum machine learning models}
\author{Maria Schuld}
\affiliation{Xanadu, Toronto, ON, M5G 2C8, Canada}

\author{Ryan Sweke}
\affiliation{Dahlem Center for Complex Quantum Systems, Freie Universit\"{a}t Berlin, 14195 Berlin, Germany}

\author{Johannes Jakob Meyer}
\affiliation{Dahlem Center for Complex Quantum Systems, Freie Universit\"{a}t Berlin, 14195 Berlin, Germany}

\date{\today}

\begin{abstract}
Quantum computers can be used for supervised learning by treating parametrised quantum circuits as models that map data inputs to predictions. While a lot of work has been done to investigate practical implications of this approach, many important theoretical properties of these models remain unknown. Here we investigate how the strategy with which data is encoded into the model influences the expressive power of parametrised quantum circuits as function approximators. We show that one can naturally write a quantum model as a partial Fourier series in the data, where the accessible frequencies are determined by the nature of the data encoding gates in the circuit. By repeating simple data encoding gates multiple times, quantum models can access increasingly rich frequency spectra. We show that there exist quantum models which can realise all possible sets of Fourier coefficients, and therefore, if the accessible frequency spectrum is asymptotically rich enough, such models are universal function approximators.
\end{abstract}

\maketitle

A popular approach to quantum machine learning uses trainable quantum circuits as machine learning models similar to neural networks. Quantum gates -- the building blocks of quantum circuits -- are used to encode data inputs $\bx = (x_1,\dots,x_N)$ as well as trainable weights $\boldsymbol \theta = (\theta_1,\dots, \theta_M)$. The circuit is measured multiple times to estimate the expectation of some observable, and the result is interpreted as a prediction. The overall computation implements a  ``quantum model function'' $f_{\btheta}(\bx)$, a machine learning model that is based on quantum computing. This approach is known by different names such as \textit{variational circuits} \cite{mcclean2016theory, romero2019variational},  \textit{quantum circuit learning} \cite{mitarai2018quantum}, \textit{quantum neural networks} \cite{farhi2018classification, mcclean2018barren}, or \textit{parametrised quantum circuits} \cite{benedetti2019parameterized}. 
 
A lot of work has been done to understand the practical details of this approach, 
leading to useful training strategies \cite{mitarai2018quantum, ostaszewski2019quantum, stokes2020quantum}
and ways to emulate and extend classical machine learning methods \cite{verdon2019quantum_a,verdon2019quantum_b,romero2019variational, cong2019quantum, liu2018differentiable}. A growing body of literature, motivated by the dilemma of investigating the performance of quantum machine learning when only small-scale experiments are physically possible, tries to understand the potential power of variational circuits from a theoretical perspective \cite{harrow2019low, mcclean2018barren, ciliberto2020fast, cerezo2020costfunctiondependent}.
Still, only little is known about the actual \textit{function classes} that quantum circuits give rise to. Can quantum models express any function in the input $\bx$, or are they limited to a specific class of functions? Can this class of ``learnable functions" be characterised in a meaningful way, and can the characterisation be used to guide design choices and potential applications for these quantum models? 

\begin{figure}
    \centering
    \includegraphics[width=0.45\textwidth]{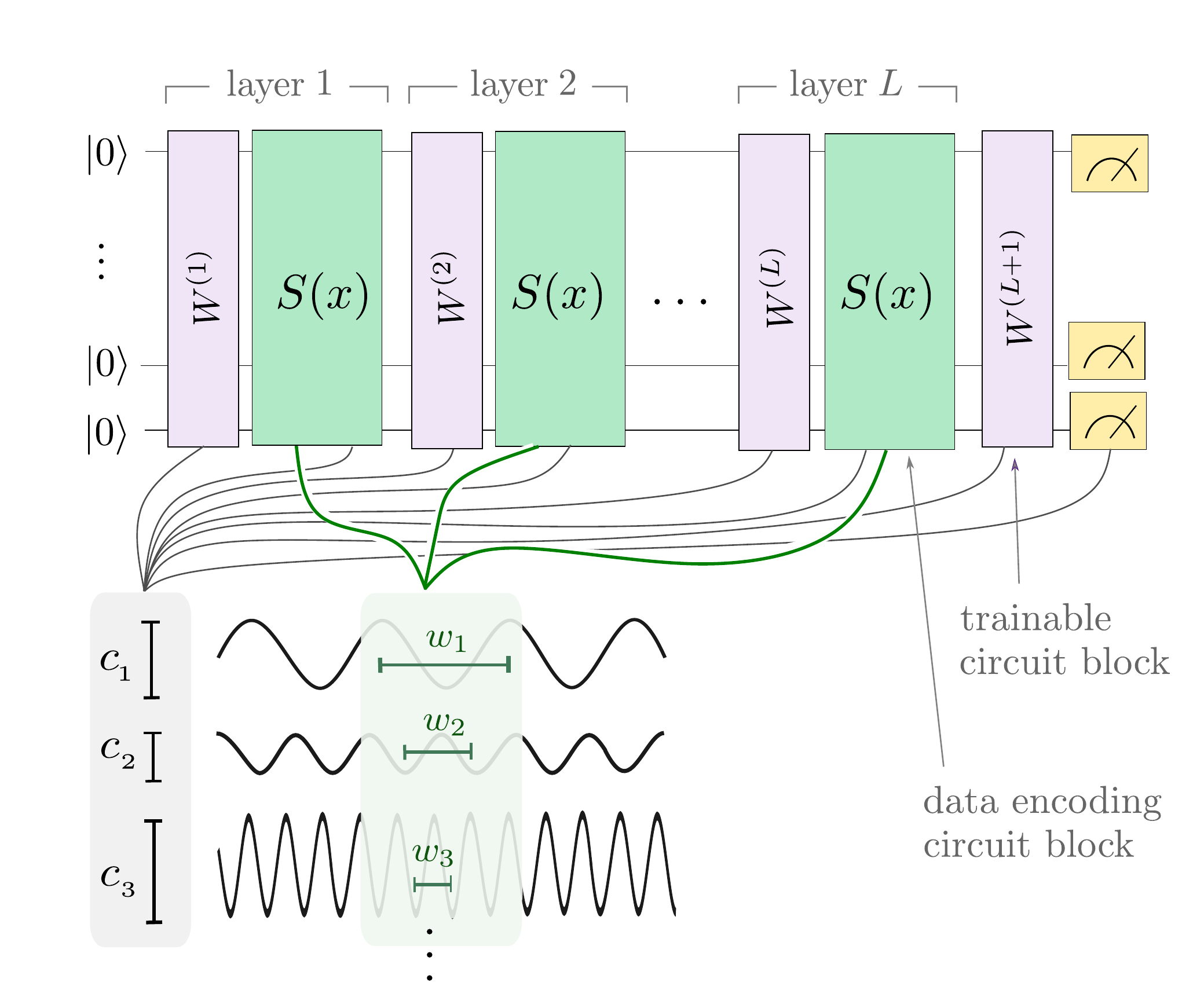}
    \caption{Illustration of the main result of this paper, shown for one-dimensional inputs $x \in \mathbb{R}$: quantum models consisting of layers of trainable circuit blocks $W=W(\btheta)$ and data encoding circuit blocks $S(x)$ can be written as a weighed sum $\sum_{\omega} c_{\omega} e^{i \omega x}$. The data encoding circuit determines the frequencies $\omega$, and the remainder of the circuit architecture determines the coefficients $c_{\omega}$. If the $\omega$ are integer-valued (or integer-valued multiples of a base frequency $\omega_0$), the sum becomes a partial Fourier series, which allows us to systematically study properties of the function class a given quantum model can learn. }
    \label{fig:scheme}
\end{figure}

In this paper we investigate these questions in a framework focused on the role of data encoding.
We consider standard models from the literature that consist of multiple ``circuit layers'', each made up of a \textit{data encoding (circuit) block} and a \textit{trainable (circuit) block}, and assume that input features $x\in \mathbb{R}$ are encoded by gates of the form $e^{ixH}$, where $H$ is an arbitrary Hamiltonian. Our main tool is the natural representation of such quantum models as a Fourier-type sum
\begin{equation}
    f_{\btheta}(\bx) = \sum_{\oomega \in \Omega} c_{\oomega}(\btheta) e^{i \oomega \bx},
    \label{eq:fourier_type}
\end{equation}
where $\oomega \bx$ is the inner product. We show that the \emph{frequency spectrum} $\Omega \subset \mathbb{R}^N$ is solely determined by the eigenvalues of the data-encoding Hamiltonians 
while the design of the entire circuit controls the coefficients $c_{\oomega}$ that a quantum model can realise (see Fig.\ref{fig:scheme}). The representation of quantum models as Fourier-type sums characterises the function families that a given class of quantum models can learn via two interrelated properties. The first property is the frequency spectrum $\Omega$, which determines the functions $e^{i \oomega \bx}$ that the quantum model ``has access to''. The second property is the \textit{expressivity of the coefficients} $\{c_{\oomega} \}$ that a class of quantum models can control, which determines how the accessible functions can be combined.
In many natural settings, the frequencies are integers, $\Omega \subset \mathbb{Z}^N$, and the sum becomes a multi-dimensional \textit{partial Fourier series}
\begin{equation}
    f_{\btheta}(\bx) = \sum_{\bn \in \Omega} c_{\bn}(\btheta) e^{i \bn \bx},
    \label{eq:fourier_series}
\end{equation}
where the $e^{i \bn \bx}$ are orthogonal basis functions. We use the nomenclature \textit{partial} Fourier series to indicate the fact that only a subset of the Fourier coefficients are non-zero. The Fourier series formalism allows us to study quantum models using the rich techniques developed in Fourier analysis. 

First, we consider the popular strategy of encoding an input into single-qubit rotations, and show that repeating the encoding $r$ times either sequentially or in parallel allows the model to access frequency spectra $\Omega$ consisting of $r$ frequencies. This places into a broader context an observation made in Ref.~\cite{ostaszewski2019quantum}, which states that encoding a data feature only once into the angle of a single qubit rotation restricts the function class that quantum models can learn to a simple sine function (or equivalently, a Fourier series with a single frequency). Second, we provide bounds for the maximum number of frequencies and Fourier coefficients a quantum model can control for more general data encoding strategies. Finally, we study the \textit{universality} of quantum models. We show that for sufficiently flexible trainable circuit blocks there exists a quantum model which can realise any possible set of Fourier coefficients. If, asymptotically, the accessible frequency spectrum is rich enough, then such models are universal function approximators. This follows from the fact that Fourier series with arbitrary coefficients can approximate any square integrable function on a given interval~\cite{carleson1966convergence}. 

A few existing studies are related to our work. For example, P{\'e}rez-Salinas et al.~\cite{perez2020data} considered quantum models with sequentially repeated data encodings and conjectured that they are universal function approximators 
under a special kind of classical data pre-processing. Killoran et al.~\cite{killoran2019continuous} have shown that many neural networks can be naturally emulated on a photonic quantum computer, and point out that such quantum models therefore inherit universality. The majority of quantum machine learning papers concerned with questions of expressivity and universality \cite{sim2019expressibility, chen2018universal, du2018expressive, biamonte2019universal}, however, interpret these concepts from a quantum information perspective, which asks whether a circuit can express \textit{any quantum computation}, not any function in the inputs. However, in the context of (supervised) machine learning, quantum universality does not necessarily imply universal function approximation; a quantum circuit able to realise arbitrary unitary evolutions may only be able to express a limited class of functions $f(\bx)$.\footnote{As an extreme example, consider a parametrised quantum circuit that encodes the data into gates acting on qubits which are never entangled with the measured qubits -- in which case $f(\bx)$ is a constant function, and the resulting machine learning model trivial.} From the \textit{function} expressivity view-point, Ref.~\cite{caro2020pseudo} has investigated the pseudo-dimension of a particular class of quantum models, an expressivity metric which allows one to characterize learnability and generalization power of the associated model. Also the essential role of data encoding for quantum machine learning has been emphasised in previous papers. For example, it was remarked that data encoding determines the features that quantum models represent \cite{schuld2019quantum, havlivcek2019supervised}, the decision boundaries they can learn \cite{larose2020robust}, as well as the measurements that optimally distinguish between data classes \cite{lloyd2020quantum}. A central contribution of this paper is to systematically combine the study of data encoding with that of the expressivity of quantum models. 

We present our results as follows: Section \ref{Sec:tool} introduces the basic idea of writing quantum models as partial Fourier series. Section \ref{Sec:expressivity} puts the tool to use and analyses the expressivity of quantum models, which leads to a proof that quantum models are universal in Section \ref{Sec:universality}. Section \ref{Sec:implications} discusses practically relevant implications.

\textit{Note: After publishing the preprint of this article, we were made aware that the connection between Fourier series and quantum machine learning models with repeated data-encoding has already been established in Ref.~\cite{vidal2019input}. While there is significant overlap between this work and ours, we provide a novel universality result, as well as a systematic development of this connection through practically relevant examples.}

\section{Quantum models as partial Fourier series} \label{Sec:tool}

First, we introduce our basic tool: the natural representation of a quantum model as a partial Fourier series. For simplicity, the majority of our presentation will focus on the case of univariate functions with inputs $ x\in\mathbb{R}$, but we generalise this to multivariate functions in Appendix~\ref{app:multivariate_FS}, which is used for the analysis of universality in Section \ref{Sec:universality}.

We define a (univariate) quantum model $f_{\btheta}(x)$ as the expectation value of some observable with respect to a state prepared via a parametrised quantum circuit, i.e.
\begin{equation}
    f_{\boldsymbol \theta}(x) = \bra{0} U^{\dagger}(x,\boldsymbol \theta) M U(x, \boldsymbol \theta) \ket{0},
    \label{eq:model}
\end{equation}
where $|0\rangle$ is some initial state of the quantum computer, $U(x, \btheta)$ is a quantum circuit that depends on the input $x$ and a (possibly empty) set of parameters $\btheta$,
and $M$ is some observable. The prediction of the quantum model at a specific point $x$ is estimated in practice by running the circuit multiple times and averaging over the measurement results.\footnote{Note that the quantum model is a theoretical construction, since physical measurements will always result in an \textit{estimate} of the output expectation, making $f$ a random variable -- a complication that we will ignore here.} The quantum circuit itself is constructed from $L$ \textit{layers}, each consisting of a data encoding circuit block $S(x)$ and a trainable circuit block $W(\btheta)$ controlled by the parameters $\btheta$ (see Fig. \ref{fig:scheme}). The data encoding block is the same in every layer and consists of gates of the form $\mathcal{G}(x) = e^{-ix H}$, where $H$ is a Hamiltonian that \textit{generates} the ``time evolution'' used to encode the data. Since we want to focus on the role of the data encoding, and to avoid further assumptions on how the trainable circuit blocks are parametrised, we view the trainable circuit blocks as arbitrary unitary operations, $W(\btheta)=W$, and drop the subscript of $f_{\btheta}$ from here on.\footnote{Of course, in realistic near-term settings these unitaries are implemented as short gate sequences and are by no means universal, and there are many interesting questions around how a specific parametrisation influences the properties of the resulting quantum model.}
With this assumption, the overall quantum circuit has the form 
\begin{equation}
	U(x) = W^{(L+1)} S(x) W^{(L)} \dots   W^{(2)} S(x)W^{(1)}.
	\label{eq:unitary}
\end{equation}
Note that the encoding strategy is very natural, since the physical control parameters of quantum dynamics usually enter as time evolutions of Hamiltonians -- the most prominent example being Pauli rotations. This model includes ``parallel encodings'' that repeat the encoding on different subsystems~\cite{rebentrost2014quantum}, as well as ``data reuploading'', where the encoding is repeated
multiple times in sequence~\cite{perez2020data} (see Fig.~\ref{fig:parallel_vs_sequence}). With a small amount of classical pre-processing this model includes even many quantum machine learning algorithms that are not based on the principles of parametrised circuits (see also Section \ref{Sec:hybrid}).
 
 \begin{figure}[t]
    \centering
    \includegraphics[width=0.45\textwidth]{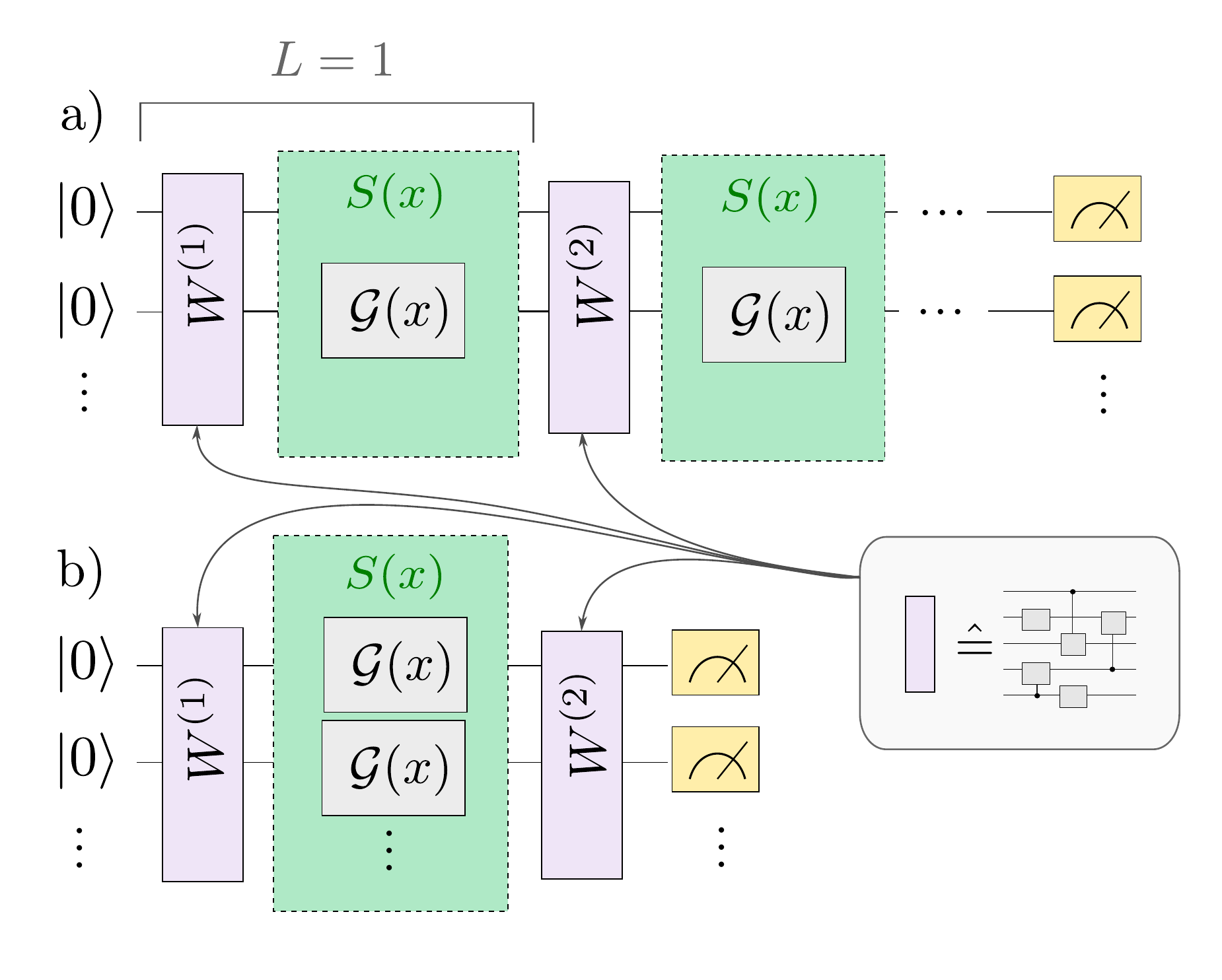}
    \caption{The general quantum model considered in this paper includes qubit-based circuits where the encoding subroutine consists of a single-qubit gate $\mathcal{G}(x)$, which is often used in practice. The picture illustrates two special cases investigated in Section \ref{Sec:expressivity}: (a) shows a circuit where the scalar input feature $x$ is encoded by one single-qubit gate, which can be repeated $r = L>1$ times but always acts on the same qubit, and (b) repeats the encoding gate $r$ times in ``parallel'' using only one layer. Note that the trainable blocks $W$ (purple rectangles) represent arbitrary unitaries, which in practice would be implemented as a sequence of local gates (inset).}
    \label{fig:parallel_vs_sequence}
\end{figure}

Our goal is to write $f$ as a partial Fourier series 
\begin{equation}
    f(x) = \sum_{n \in \Omega} c_{n} e^{i n x},
    \label{eq:partial_fourier_series}
\end{equation}
with integer-valued frequencies (if $\Omega = \{-K,\ldots, K\}$, then we call
\eqref{eq:partial_fourier_series} a \textit{truncated Fourier series}). The first step is to note that one can always find an eigenvalue decomposition of the generator Hamiltonian $H = V^{\dagger} \Sigma V$ where $\Sigma$ is a diagonal operator containing $H$'s eigenvalues $\lambda_1, ..., \lambda_d$  on its diagonal. The data encoding unitary becomes $S(x) = V^{\dagger} e^{-i x \Sigma} V$, and we can ``absorb'' $V$, $V^{\dagger}$ into the arbitrary unitaries $W' = VWV^\dagger$. Hence, without loss of generality we will assume that $H$ is diagonal. This allows us to separate the data-dependent expressions from the remainder of the circuit in each component $i$ of the quantum state $U(x) \ket{0}$,
\begin{multline}
    \left[U(x) \ket{0}\right]_{i} =  \sum_{j_1 \dots j_L =1}^d e^{-i(\lambda_{j_1} + \dots + \lambda_{j_L}) x} \\
    \times W_{i j_L}^{(L+1)} \dots W_{j_2 j_1}^{(2)} W_{j_1 1}^{(1)} .
    \label{eq:circuit_fourier}
\end{multline}
For ease of notation we introduce the multi-index $\boldsymbol{j} = \{j_1, \dots, j_L\} \in [d]^L$, where $[d]^L$ denotes the set of any $L$ integers between $1, \dots, d$. We can then denote the sum of eigenvalues for a given $\boldsymbol{j}$ by $\Lambda_{\boldsymbol{j}} = \lambda_{j_1} + \dots + \lambda_{j_L}$, and write 
\begin{multline}
    \left[U(x) \ket{0}\right]_{i} =  \sum_{\boldsymbol{j} \in [d]^L} e^{-i \Lambda_{\boldsymbol{j}} x} W_{i j_L}^{(L+1)} \dots W_{j_2 j_1}^{(2)} W_{j_1 1}^{(1)} .
    \label{eq:circuit_fourier2}
\end{multline}
To consider the full quantum model from Eq.~\eqref{eq:model} we need to take into account the complex conjugation of this expression as well as the measurement, and get
\begin{equation}
    f(x) = \sum_{ \boldsymbol{k}, \boldsymbol{j} \in [d]^L} e^{i (\Lambda_{\boldsymbol{k}} - \Lambda_{\boldsymbol{j}})  x} a_{\boldsymbol{k}, \boldsymbol{j}},
    \label{eq:raw_sum}
\end{equation} 
where the $a_{\boldsymbol{k}, \boldsymbol{j}}$ contain the terms stemming from the arbitrary unitaries and measurement,
\begin{multline} a_{\boldsymbol{k}, \boldsymbol{j}} = \sum_{i, i'}  (W^*)_{1 k_1}^{(1)} (W^*)_{j_1 j_2}^{(2)}  \dots (W^*)_{ j_L i}^{(L+1)} M_{i, i'} \\
\times W_{i' j_L}^{(L+1)} \dots W_{j_2 j_1}^{(2)} W_{j_1 1}^{(1)}.
\end{multline}
The second step consists of grouping all terms in the sum \eqref{eq:raw_sum} whose basis function $e^{i (\Lambda_{\boldsymbol{k}} - \Lambda_{\boldsymbol{j}})  x}$ have the same frequency $\omega = \Lambda_{\boldsymbol{k}} - \Lambda_{\boldsymbol{j}}$. All frequencies accessible to the quantum model are contained in its frequency spectrum
\begin{equation} 
\Omega = \{\Lambda_{\boldsymbol{k}} - \Lambda_{\boldsymbol{j}},  \; \boldsymbol{k}, \boldsymbol{j} \in [d]^L \}.
\label{eq:fourier_omega}
\end{equation}
This yields
\begin{equation}
    f(x) = \sum_{\omega \in \Omega} c_{\omega} e^{i \omega x}
    \label{eq:fourier_sum}
\end{equation} 
where the coefficients are obtained by summing over all $a_{\kk,\jj}$ contributing to the same frequency
\begin{align}
c_{\omega} = \sum_{\substack{\boldsymbol{k},\boldsymbol{j} \in [d]^L\\\Lambda_{\boldsymbol{k}}-\Lambda_{\boldsymbol{j}} = \omega}} a_{\boldsymbol{k}, \boldsymbol{j}}.
    \label{eq:model_fourier_coefficients}
\end{align}

We note that the frequency spectrum $\Omega$ has the following important properties: $0 \in \Omega$, and for every frequency $\omega\in\Omega$, we have also that $-\omega \in \Omega$. Additionally, since
$c_\omega = c^*_{-\omega}$, Eq.~\eqref{eq:fourier_sum} realises a real-valued function. We will therefore denote with $K = (|\Omega|-1)/2$ the \emph{size} of the spectrum, as it quantifies how many independent non-zero frequencies the model has access to. The largest available frequency $D = \max(\Omega)$ is called the \emph{degree} of the spectrum. Furthermore, the coefficients $c_{\omega}$ are determined by the arbitrary gates $W^{(1)} \dots W^{(L+1)}$ (which absorbed the $V$, $V^{\dagger}$ from the encoding Hamiltonians), as well as by the measurement observable. As a consequence, a quantum model's frequency spectrum is solely determined by the eigenvalues of the data encoding gates, while its Fourier coefficients depend on the entire circuit, as was claimed in Fig.~\ref{fig:scheme}. 
While so far we have not imposed restrictions on the frequencies $\omega$, one can see that for integer-valued eigenvalues $\lambda_1, \dots, \lambda_d$, the frequencies in $\Omega$ are themselves integer-valued, and Eq.~\eqref{eq:fourier_sum} yields the real-valued partial Fourier series from Eq.~\eqref{eq:partial_fourier_series}. As we will show in the following section, common data encoding strategies in near-term quantum machine learning fulfill the property of an integer-valued frequency spectrum. Even if the eigenvalues of the encoding gate generators, and therefore the accessible frequencies $\omega \in \Omega$, are merely integer-valued multiples of a ``base frequency'' $\{n_1 \omega_0, n_2 \omega_0, \dots\}$, the treatment is still analogous to the integer case (see Appendix~\ref{app:non_integer_frequencies}). We therefore focus much of our analysis on this case.

For both integer or non-integer frequencies, the expressivity of a quantum model is determined by two different properties: the frequency spectrum of the quantum model, including its size and degree, and the expressivity of the coefficients controlled by the model. As we will show in the next section, these two properties give us insights into the function classes that different quantum models can learn. 

\section{The expressivity of quantum models}\label{Sec:expressivity}

We proceed to use the Fourier series formalism to investigate the expressivity of quantum models. We start with an analysis of the popular strategy \cite{mitarai2018quantum, schuld2020circuit, havlivcek2019supervised, hubregtsen2020evaluation, koide2020quantum, watabequantum, blank2019quantum, zhao2019building} of using single-qubit Pauli rotations in the encoding subroutine $S(x)$ in order to showcase the practical value of the approach. We then characterise the limits of a quantum model's expressivity for a given data encoding gate in more general terms. 

\subsection{A single Pauli-rotation encoding can only learn a sine function}\label{Sec:sine}

As a ``warm-up" application of the Fourier series formalism, we start by considering a simple quantum model with $L=1$, where we use a single-qubit gate $\mathcal{G}(x) = e^{-i x H}$ to encode the input $x$ into the circuit (see also Fig.~\ref{fig:parallel_vs_sequence}a with $L=1$),
\begin{equation}
    U(x) = W^{(2)}\mathcal{G}(x) W^{(1)}.
    \label{eq:single_qubit_encoding}
\end{equation}
As a single-qubit gate generator, $H$ has two distinct eigenvalues  $(\lambda_1,\lambda_2)$. We can without loss of generality always rescale the energy spectrum to $(- \gamma, \gamma)$ because the global phase is unobservable. We note that the class of such encoding gates includes Pauli rotations, with $H = (1/2)\sigma$ for $\sigma\in \{\sigma_{x},\sigma_y,\sigma_z\}$, for which $\gamma=\frac{1}{2}$. We aim to show that models of the type \eqref{eq:single_qubit_encoding} \textit{always} lead to functions of the form $f(x) = A \sin(2\gamma  x + B) + C$ where $A, B, C$ are constants determined by the non-encoding part of the variational circuit, which reproduces the prior observation from \cite{ostaszewski2019quantum}. A sine function can be described by a truncated Fourier series of degree $1$ -- and in the next section we will go on to show how one can systematically increase the degree by repeating the encoding gate. 

First, since we can absorb the factor $\gamma$ into the data input by re-scaling it via $\tilde{x} = \gamma x$, we can assume without loss of generality that the eigenvalues of $H$ are always $\lambda_1 = -1, \lambda_2 = 1$. From Eq.~\eqref{eq:fourier_omega} we can immediately see that the spectrum of the quantum model is given by $\Omega = \{-2, 0, 2\}$ (since the possible differences $\lambda_{k_1} - \lambda_{j_1}$ for $\lambda_{k_1}, \lambda_{j_1} \in \{-1, 1\}$ are $-1-(1)$, $-1-(-1)$, $1-(1),$ and $1-(-1)$).
The Fourier coefficients in Eq.~\eqref{eq:model_fourier_coefficients} become
\begin{align}
    c_{0} &=  \sum_{i, i'} M_{i i'} (W^*)^{(1)}_{1 2} (W^*)^{(2)}_{2 i} W^{(2)}_{i' 1} W^{(1)}_{1 1},\\
    c_{2} &=  \sum_{i,i'} M_{i i'} (W^*)^{(1)}_{1 1} (W^*)^{(2)}_{1 i} W^{(2)}_{i' 2} W^{(1)}_{2 1},\\
    c_{-2} &=  c^*_{2},
    \label{eq:model_fourier_single_Pauli}
\end{align}
and the quantum model's frequency spectrum consists of a single non-zero frequency:
\begin{align*}
    f(x) &=  c_{-2}  e^{i 2  \tilde{x}} + c_{0}  + c_{2}  e^{-i 2  \tilde{x}} \\
    &= c_{0}  +  2|c_{2}| \cos (2\tilde{x} - \operatorname{arg}(c_2)),
\end{align*}
where $\operatorname{arg}(c_2)$ is the complex phase of $c_2$. For Pauli rotations, one has $\tilde{x} = \gamma x = \frac{x}{2}$, and we recover the result of \cite{ostaszewski2019quantum} with $A = 2 |c_{2}|, B = -\pi/2 - \operatorname{arg}(c_2),$ and $C = c_{0}$.
Importantly, we have not assumed anything about the number of qubits, the nature of the unitaries $W$, or the measurement $M$. This illustrates a key point of this paper: even with the ability to implement very wide and deep quantum circuits (which may even be classically intractable to simulate), \textit{the expressivity of the corresponding quantum model is fundamentally limited by the data encoding strategy}.

To support this finding, Fig.~\ref{fig:fitting} shows numerical evidence: encoding data via a Pauli-X rotation results in a quantum model that can only learn to fit a Fourier series of a single frequency -- and only if that frequency is exactly matched by how the data is scaled.  

\begin{figure}[t]
    \centering
    \includegraphics[width=0.4\textwidth]{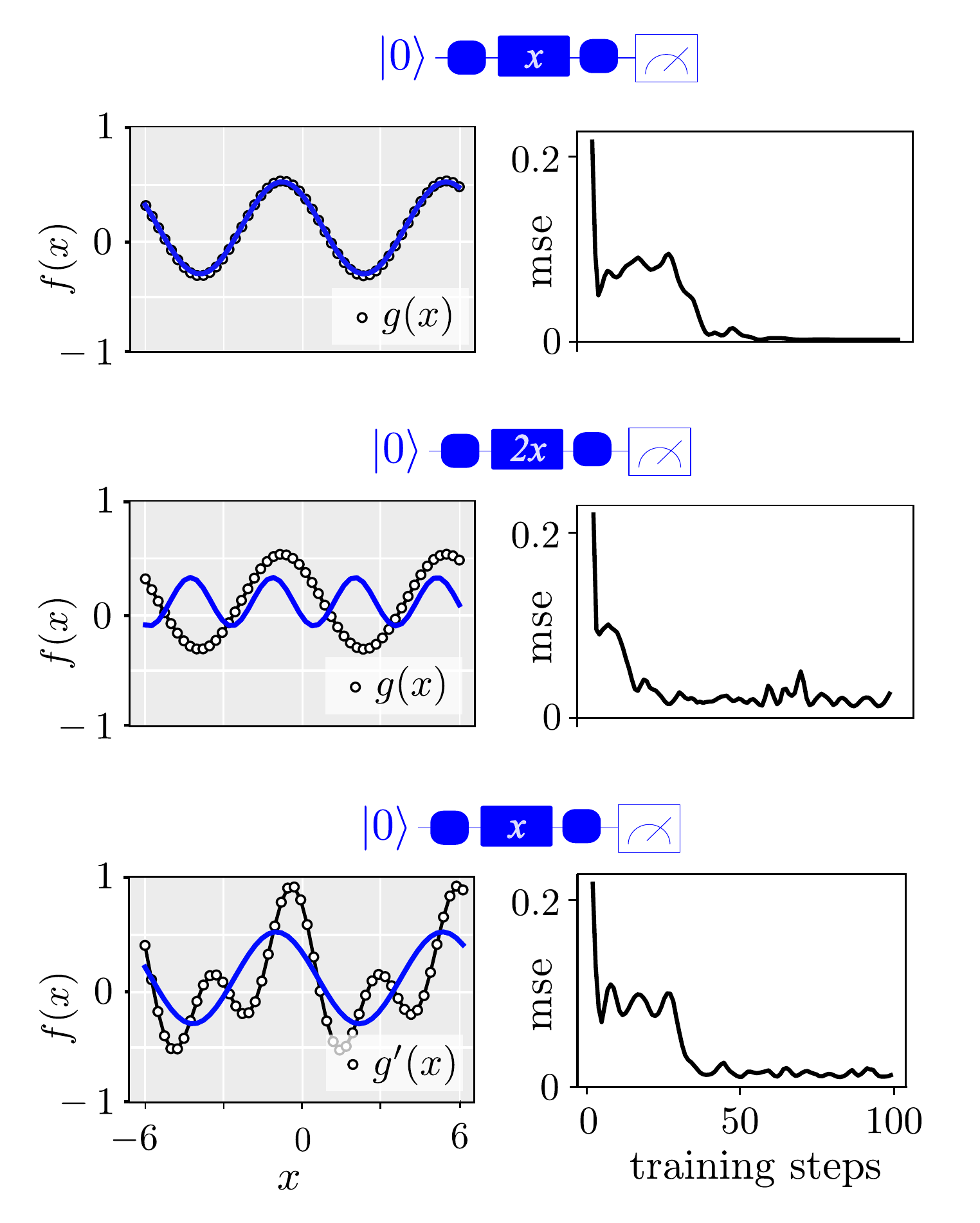}
    \caption{A parametrised quantum model is trained with data samples (white circles) to fit a target function $g(x) = \sum_{n=-1}^1 c_{n} e^{-nix}$ or $g'(x) = \sum_{n=-2}^2 c_{n} e^{-nix}$ with coefficients $c_0=0.1$, $c_1 = c_2 = 0.15 - 0.15i$. The variational circuit is of the form $f(x) = \bra{0} U^{\dagger}(x) \sigma_z U(x) \ket{0} $ where $\ket{0}$ is a single qubit, and $U = W^{(2)} R_x(x) W^{(1)} $. The $W$ (round blue symbols) are implemented as general rotation gates parametrised by three learnable weights each, and $R_x$ (square blue symbols) is a single Pauli-X rotation. The left panels show the quantum model function $f(x)$ and target function $g(x), g'(x)$, while the right panels show the mean squared error between the data sampled from $g$ and $f$ during a typical training run. Feeding in the input $x$ as is (top row), the quantum model easily fits the target of degree $1$. Re-scaling the inputs $x \rightarrow 2x$ causes a frequency mismatch, and the model cannot learn the target any more (middle row). However, even with the correct scaling, the variational circuit cannot fit the target function of degree $2$ (bottom row). The experiments in this paper were all performed using the \textit{PennyLane} software library \cite{bergholm2018pennylane}.}
    \label{fig:fitting}
\end{figure}

\subsection{Repeated Pauli encodings linearly extend the frequency spectrum}\label{Sec:repeat}

Given the severe limitations exposed in the previous section, a natural question is how we can extend the accessible frequency spectrum of a quantum model. To this end, we demonstrate in this section that by using either single-layer models with $L=1$ where the encoding gate is repeated $r$ times in \textit{parallel} (as per Fig.~\ref{fig:parallel_vs_sequence}b), or multi-layer models with $L>1$ where the encoding gate is effectively repeated $r=L$ times in \textit{series} (as per Fig.~\ref{fig:parallel_vs_sequence}a), one can systematically increase the degree of the truncated Fourier series to $r$. We note once again that both of these techniques have been utilised in prior practical applications \cite{schuld2020circuit, mitarai2018quantum, hubregtsen2020evaluation, koide2020quantum, watabequantum, perez2020data}, and as such the observations we make here offer insight into the properties of these models. 

Firstly, let us consider the case of single-qubit Pauli rotations repeated in parallel (Fig.~\ref{fig:parallel_vs_sequence}b). This is a special case of our base model in Eq.~\eqref{eq:model}, with $L=1$, and
\begin{align}
    S(x) &= e^{-i\frac{x}{2}\sigma_r}\otimes\ldots\otimes e^{-i\frac{x}{2}\sigma_1},\\
    &:= e^{-ixH}
\end{align}
where $\sigma_j\in \{\sigma_x, \sigma_y, \sigma_z\}$. The fact that all rotation gates commute (as they act on different qubits) allows us to diagonalise $H$ by diagonalising each rotation gate individually. Doing this, we find that
\begin{align}
    S(x) &= V_re^{-i\frac{x}{2}\sigma_z}V_r^{\dagger}\otimes\ldots\otimes V_1e^{-i\frac{x}{2}\sigma_z}V_1^{\dagger},\\
    &= V\mathrm{exp}\left(-i\frac{x}{2}\sum_{q = 1}^r\sigma^{(q)}_z\right)V^{\dagger},\\
    &:= Ve^{-ix\Sigma}V^{\dagger},
\end{align}
where $\sigma^{(q)}_z$ is the (diagonal) $r$-qubit operator which acts non-trivially, via $\sigma_z$, only on the $q$'th qubit. Performing the calculation yields $\Sigma = \rm{diag}\left( \lambda_1, \dots, \lambda_{2^r}\right)$, with the $r+1$ unique entries
\begin{align*}
\lambda_p = \left(\frac{p}{2} - \frac{r-p}{2}\right) = p - \frac{r}{2}, \; p \in \{0,\dots, r\}, 
\end{align*}
which are all possible sums of $r$ values $\pm 1/2$. 
According to Eq.~\eqref{eq:fourier_omega}, the frequency spectrum for $L=1$ contains differences of any two of these eigenvalues, and we get
\begin{align}
    \Omega_{\rm par} & = \left\{ \lambda_{k_1} - \lambda_{j_1} | \; k_1, j_1 \in \{1,\dots,2^r\right\} \} \\
    &= \left\{\left(p - \frac{r}{2}\right) - \left(p' - \frac{r}{2}\right) \right| \nonumber\\
     & \phantom{AAAAAAAAAAA} \; p, p'\in \{0,\dots,r \} \}\\ 
    &= \left\{ p-p'\, | \, p, p'\in \{0,\dots,r \} \right\}\\
    & =\{-r, -(r-1), \dots, 0 , \dots, r-1, r\}.
\end{align}
Hence, a univariate quantum model with $r$ parallel Pauli-rotation encodings can be expressed as a truncated Fourier series of degree $r$. 

Interestingly, the same scaling effect is achieved by a single-qubit Pauli rotation encoding repeated layer-wise (Fig.~\ref{fig:parallel_vs_sequence}a). 
Consider the quantum model in Eqs.~\eqref{eq:model} and \eqref{eq:unitary}, for  $L=r> 1$ layers, where $S(x) = \mathrm{exp}(-i(x/2)\sigma_j)$ is a single-qubit Pauli rotation (i.e. $\sigma_j \in \{\sigma_x,\sigma_y,\sigma_z\}$) which acts on the same qubit in each layer.
The circuit in Eq.~\eqref{eq:unitary} becomes 
$$U(x) = W^{(L+1)}  e^{-i\frac{x}{2}\sigma_L} W^{(L)}\dots W^{(2)} e^{-i\frac{x}{2}\sigma_1} W^{(1)}.$$ Diagonalizing the Pauli rotations as before, then gives us $\Sigma = (1/2)\sigma_z$ for all encoding layers.
The frequency spectrum from Eq.~\eqref{eq:fourier_omega} is a sum of $2r$ terms of value $\pm 1/2$,
\begin{multline}
    \Omega_{\rm seq} = \{(\lambda_{k_1} + \dots + \lambda_{k_r}) - (\lambda_{j_1} + \dots + \lambda_{j_r}) \, |\\
    k_1,\dots, k_r, j_1,\dots, j_r \in \{1, 2 \} \}.
\end{multline} 
After a short calculation, one finds that $\Omega_{\rm seq} = \Omega_{\rm par}$. Again, a quantum model with $r$ sequential repetitions of the single-qubit Pauli encoding can be expressed as a truncated Fourier series of degree $r$. The growth mechanism of a quantum model's frequency spectrum via parallel and sequential repetitions of single-qubit Pauli encodings is numerically illustrated in Fig.~\ref{fig:L1-5}.

\begin{figure}
    \centering
    \includegraphics[width=0.47\textwidth]{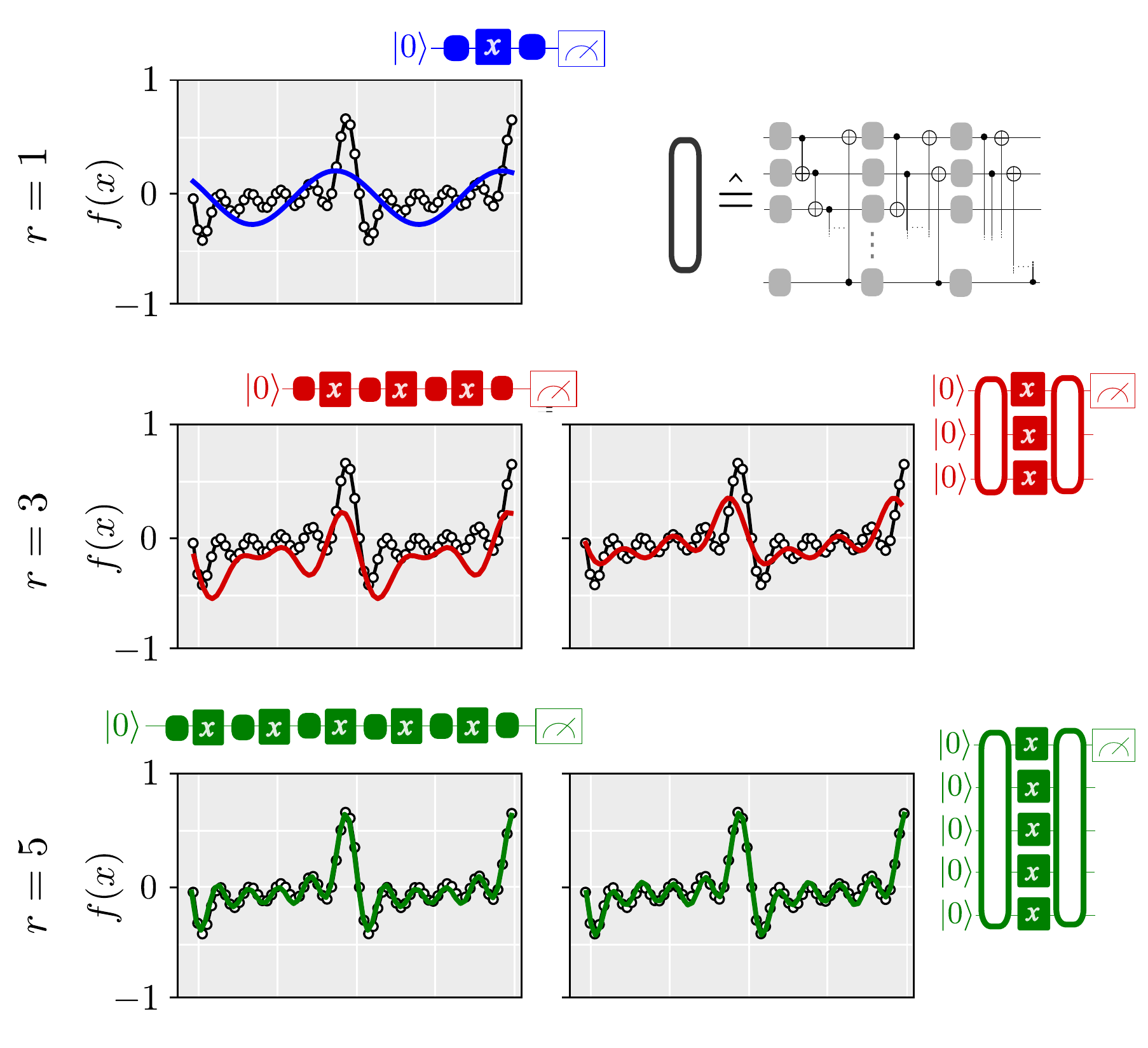}
    \caption{Fitting a truncated Fourier series of degree $5$, $g(x) = \sum_{n=-5}^5 c_n e^{2 i n x}$ with $c_n = 0.05 - 0.05i$ for $n=1,\dots,5$ and $c_0=0$, using a quantum model that repeats the encoding $r =1, 3, 5$ times in sequence (left) and in parallel (right). Increasing $r$ allows for closer and closer fits until $r=5$ fits the data almost perfectly in both cases - illustrating that parallel and sequential repetitions of Pauli encodings extend the Fourier spectrum in the same manner. All models were trained with at most 200 steps of an Adam optimiser with learning rate $0.3$ and batch size $25$. For the ``parallel'' simulations, the $W$ are not arbitrary unitaries but implemented by a smaller ansatz of three layers of parametrised rotations as well as entangling CNOT gates, as per Ref.~\cite{schuld2020circuit}, which is depicted by the hollow rounded gate symbols. The quantum model still easily fitted the target function, which suggests that the results of this paper are of relevance for realistic quantum models.}
    \label{fig:L1-5}
\end{figure}

\subsection{Limits of expressivity}\label{Sec:limits}

The representation of quantum models as Fourier-type sums
immediately allows us to derive upper bounds on the expressivity of such quantum models when using $L$ repetitions of an encoding gate of dimension $d$ (which is at most the size of the overall Hilbert space). Firstly, let us consider the maximum spectrum size $K(L, d)$ of a quantum model, quantifying the number of frequencies it can ``support'' or ``has access to''. Since the frequency spectrum is defined as $\Omega = \{(\lambda_{j_1} + \dots \lambda_{k_L}) - (\lambda_{j_1} + \dots + \lambda_{k_L})\}$ (where the indices $j_1,\dots, j_L$, $k_1, \dots, k_L$ run over all dimensions of the encoding gate, from $1$ to $d$), the frequencies are sums of $2L$ terms, each having $d$ potential values. As a result, they can at most realise $d^{2L}$ distinct values -- irrespective of whether the eigenvalues are real or integer-valued. Since the size $K$ counts the pairs $-\omega, \omega \in \Omega$ as one and excludes the ``zero frequency", we get
\begin{equation}
    K \leq \frac{d^{2L}}{2}-1.
    \label{eq:bound}
\end{equation}
As an example, if data is encoded in a single-qubit encoding gate, we recover the result from the previous sections where the model has degree $\frac{2^2}{2}-1 = 1$. Using $L$ different encoding gates increases this to $\frac{2^{2L}}{2}-1$. As we have seen, further assumptions on the eigenvalues allow us to make this bound a lot tighter; for example when the $L$ repetitions use \textit{the same} single-qubit encoding gate, $K=L$.

An interesting question is whether there is a single quantum gate which can encode data into a quantum model that supports the frequency spectrum $\Omega_{\infty} = \{- \infty, \dots, -1, 0, 1, \dots, \infty\}$ of a full Fourier series. The answer is yes: the ubiquitous phase shifts in continuous-variable (CV) quantum systems, which correspond to a free evolution of a harmonic oscillator, have the number operator $\hat{n} = \operatorname{diag}(0, 1, 2, \dots)$ as a generator. 

\begin{figure*}
    \centering
    \includegraphics[width=0.9\textwidth]{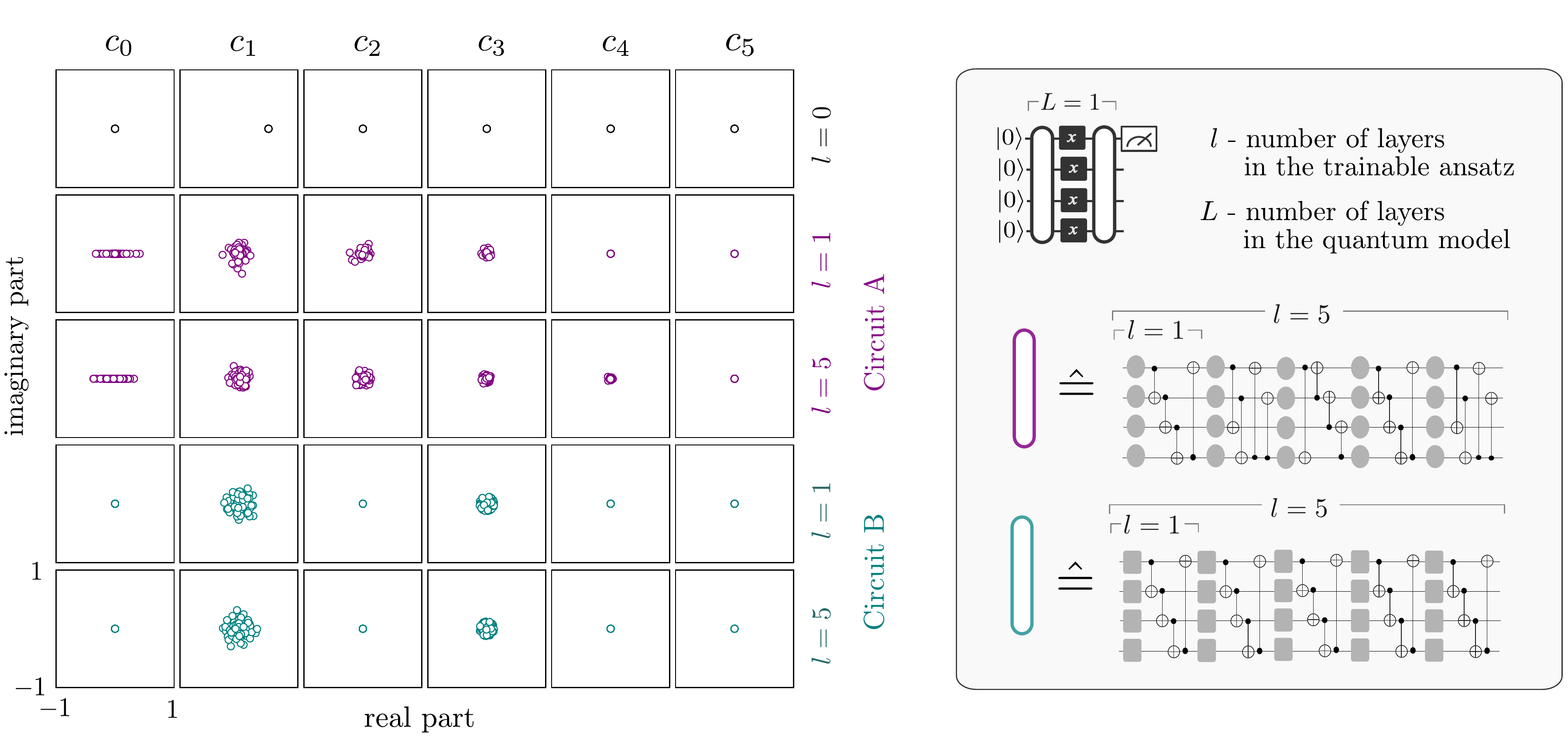}
    \caption{Real and imaginary parts of the first six Fourier coefficients sampled from $100$ randomly initialised $L=1$ quantum models. The models share the same encoding strategy of parallel Pauli-X rotations (square symbols) but vary in the ansatz and number of layers for the trainable unitaries $W$. Circuit A uses an ansatz of trainable arbitrary single qubit rotations and layer-dependent entangling structure proposed in \cite{schuld2020circuit} and already used in Fig.~\ref{fig:L1-5}, while Circuit B uses trainable Pauli-X rotations with a simple entangling structure. The plots suggest that the ``expressivity'' of the trainable circuit block --  here represented by increasing the number of times $l$ an ansatz is repeated -- has little influence on the distribution of the Fourier coefficients, as opposed to the type of ansatz.}
    \label{fig:coeffs}
\end{figure*}

While the frequency spectrum of a quantum model can directly be derived from the input encoding gates, the flexibility in the coefficients is a lot harder to investigate systematically (we will do so for special cases in the universality proofs in Section \ref{Sec:universality}). In principle, every block $W^{(1)}, \dots, W^{(L+1)}$, as well as the measurement observable, contribute to every Fourier coefficient. This means that only a few degrees of freedom in the gates may change an exponentially large (or, in the case of continuous-variable quantum computing, infinite) amount of Fourier coefficients. However, these Fourier coefficients are not arbitrary, but functions of the limited degrees of freedom of the quantum circuit, and a quantum circuit of a certain structure may only be able to realise a small subset of the entire set of all possible Fourier coefficients $\{c_{n}\}$. To arbitrarily control $K+1$ complex Fourier coefficients, we need \textit{at least} $M \geq 2K+1$ real degrees of freedom -- in other words, parameters $\btheta = (\theta_1, \dots, \theta_M)$ -- in the quantum circuit. As a special case, we saw that repeating a Pauli encoding $L$ times supports a spectrum of size $L$, which means that we need at least $2L$ degrees of freedom in the quantum circuit to control the Fourier coefficients arbitrarily -- a scaling that is realistic for shallow circuits to ``utilise the full power'' of the frequency spectrum.

While a systematic analysis of how a parametrised ansatz for the trainable blocks $W$ impacts the control of a quantum model's Fourier coefficients exceeds the scope of this paper, our simulations suggest that even quantum models with shallow trainable circuit blocks $W$ give rise to rich subsets of Fourier coefficients (see Fig.~\ref{fig:coeffs}). However, as the figure shows, an ansatz may structurally set a certain Fourier coefficient to zero. An interesting further observation is that the variance of the coefficients decreases with higher orders. Mathematically, this property stems from the fact that the number of terms in the sum of Eq. \eqref{eq:model_fourier_coefficients} tends to decrease with larger frequencies, since there are fewer ways to construct those frequencies by the difference $\Lambda_{\jj} - \Lambda_{\kk}$ of sums of encoding generator eigenvalues. We note that the Fourier coefficients of square-integrable functions show a similar behaviour, which contributes to the convergence of such series.

\section{Quantum models are asymptotically universal}\label{Sec:universality}

In the previous sections we have seen, at least for univariate functions, that certain quantum models can be written as partial Fourier series, in which the accessible frequencies are fully determined by the spectra of the Hamiltonians generating the data-encoding gates. Additionally, by using Pauli rotations as an explicit example, we have shown that by repeating such encodings, either in parallel $(L=1)$ or in series ($L>1$), it is possible to realise a \textit{truncated} Fourier series, with the number of accessible frequencies determined by the number of data-encoding gate repetitions. In light of these results, it is clear that if we allow for sufficiently many repetitions of simple data-encoding gates (such as Pauli rotations), or for Hamiltonians with large enough dimension and suitably non-degenerate spectra, then quantum models can realise arbitrary frequency spectra.

However, as discussed in the previous section, the expressivity of a quantum model is determined not only by the accessible frequency spectrum, but also by the flexibility one has in adjusting the contributions of the frequencies, i.e., with which flexibility the Fourier coefficients can be chosen. In this section we show that if one allows for trainable circuit blocks which are flexible enough to realise arbitrary global unitaries, then there exists an $L=1$ quantum model which can realise all possible sets of Fourier coefficients. Combined with the observations from the previous sections, this allows us to show that such quantum models are asymptotically universal, in the sense that if we allow the global Hilbert space dimension (or the number of finite dimensional subsystems) to tend to infinity, then such a quantum model can approximate, to arbitrary accuracy, any square-integrable function on a suitable domain. 

More specifically, we consider the (multivariate) single layer quantum model $f_{\vec{\theta}}\colon\mathbb{R}^N\rightarrow \mathbb{R}$ defined via
\begin{align}
    f_{\vec{\theta}}(\vec{x}) = \langle 0 | U^\dagger(\vec{\theta},\vec{x}) M U(\vec{\theta},\vec{x}) |0\rangle,
\end{align}
where
\begin{equation}
    U(\vec{\theta},\vec{x}) = W^{(2)}(\vec{\theta}^{(2)})S(\vec{x}) W^{(1)}(\vec{\theta}^{(1)}),
\end{equation}
with $\vec{\theta}^{(1)}, \vec{\theta}^{(2)} \subseteq \vec{\theta}$ and
\begin{equation}
    S(\vec{x}) := e^{-i x_1 H_1}\otimes\ldots\otimes e^{-i x_N H_N }.
\end{equation}
The above model is a natural extension of the univariate $L=1$ model we explored in previous sections. In Appendix  \ref{app:multivariate_FS} we show that it naturally realises a multivariate Fourier series, with the frequency spectrum fully determined by the spectra of the data-encoding Hamiltonians $\{H_l\}$, and the Fourier coefficients determined by the remainder of the circuit.

It is important to emphasise that in practical applications one would typically consider trainable circuit blocks whose circuit depth scales in a controlled way with respect to the number of qubits in the circuit. However, we will in this work assume that the trainable circuit blocks are sufficiently flexible to realise arbitrary global unitaries, which may require exponential circuit depth when decomposed into natural primitive gate sets. Given this, the asymptotic universality of quantum models with either constant, logarithmic or polynomial circuit depth trainable blocks remains an interesting open question. 

With this assumption on the trainable circuit blocks, we can drop the explicit dependence on $\vec{\theta}$, and by absorbing $W^{(1)}$ into the initial state $\ket{\Gamma}$, and $W^{(2)}$ into the observable $M$, consider instead the equivalent model
\begin{align}\label{eq:multivariate_model}
    f(\vec{x}) = \langle \Gamma | S^{\dagger}(\vec{x}) MS(\vec{x}) |\Gamma\rangle,
\end{align}
where the universality of $W^{(1)}$ and $W^{(2)}$ is reinterpreted as the assumption that $|\Gamma\rangle$ can be an arbitrary state, and $M$ an arbitrary observable. In order to simplify things further, we will also make the additional assumption that all data-encoding Hamiltonians are equal -- i.e., 
that 
\begin{align}
    S(\vec{x}) &:= e^{-ix_1 H}\otimes\ldots\otimes e^{-i x_N H}\\
    &:=S_H(\vec{x}).
\end{align}
We are interested in a reasonable notion of universality in the \emph{asymptotic} regime of infinitely many available subsystems. To formalise this, we introduce the concept of a \emph{Hamiltonian family} $\{H_m \, | \, m \in \bbN \}$ where $H_m$ acts on $m$ subsystems of dimension $d$. An explicit example of such a family is a simple tensor product of Pauli rotations, as studied in Section~\ref{Sec:repeat}, which corresponds to the Hamiltonian
\begin{align}\label{eq:pauli_onsite}
    H_m = \sum_{i=1}^m \sigma_q^{(i)}.
\end{align}
As illustrated in Fig.~\ref{fig:universality}, such a Hamiltonian family defines a family of models $\{f_{m}\}$ via 
\begin{equation}\label{eq:indexed_model}
    f_m(\vec{x}) = \langle \Gamma | S_{H_m}^{\dagger}(\vec{x}) MS_{H_m}(\vec{x}) |\Gamma\rangle,
\end{equation}
where for each $m$, the measurement $M$ and the state $|\Gamma\rangle$ (or equivalently the unitaries $W^{(1)}$ and $W^{(2)}$) are the learnable elements of the model.

\begin{figure}
    \centering
    \includegraphics[width=0.45\textwidth]{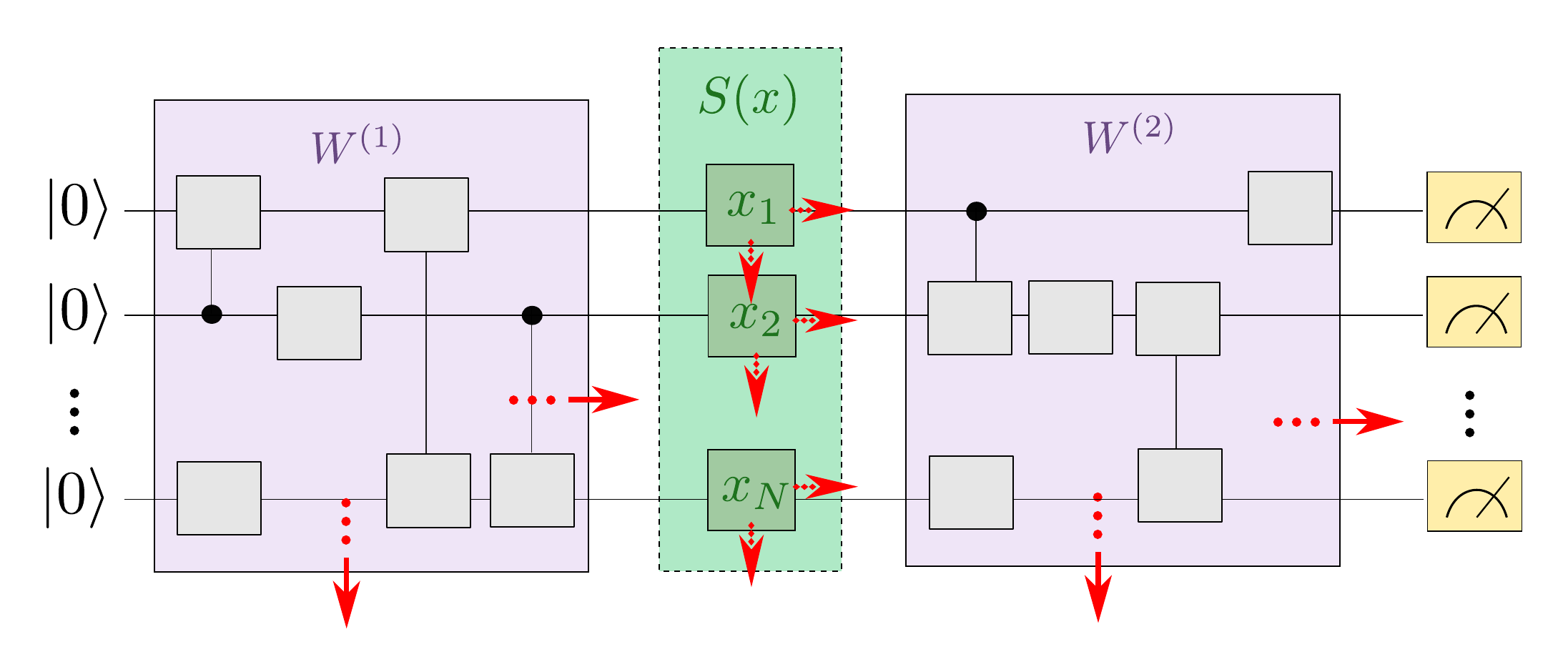}
    \caption{Multivariate $L=1$ quantum model considered for the universality theorem. Here, $S(\bx)$ consists of feature-encoding gates acting on different subsystems (green boxes). The Hamiltonians that generate these gates are defined to increase the ``richness'' of their spectrum with growing dimension of the subsystems (red arrows). Since we assume that the circuit depth and structure of the trainable unitaries $W^{(1)}$ and $W^{(2)}$ is sufficient to allow for the realisation of arbitrary unitary operations, the trainable circuits grow in dimension along with the total system size.}
    \label{fig:universality}
\end{figure}

Now, given some Hamiltonian $H_m$ with eigenvalues $\{\lambda_1,\ldots,\lambda_{d^m}\}$, we call 
\begin{equation}
    \Omega_{H_m} = \{\lambda_j-\lambda_k\,|\, j,k \in \{1,\ldots,d^m\}\}\label{eq:freq_spec_ham}
\end{equation}
the \textit{frequency spectrum} associated with $H_m$. To achieve universality, we need a Hamiltonian family whose frequency spectrum asymptotically contains any integer frequency.
We formalise this via the following notion: a Hamiltonian family $\{ H_m \}$ is a \emph{universal Hamiltonian family} if it has the property that for all $K\in \mathbb{N}$ there exists some $m\in \mathbb{N}$ such that
\begin{equation}
    \bbZ_K =  \{-K,\ldots,0,\ldots,K\} \subseteq \Omega_{H_m}.
\end{equation}

As we have seen in the previous section, the Hamiltonian family defined by the Hamiltonians in Eq.~\eqref{eq:pauli_onsite} is indeed a  universal Hamiltonian family, with $m=K$. 
As the possible number of frequencies grows exponentially, one could think of more complicated Hamiltonian families in which the required number of available subsystems only grows logarithmically $m = O(\log K)$, at the cost of more complicated, global Hamiltonian terms.
With this setup, we can now state the following universality result:

\begin{theorem*}\label{thm:universality}
Let $\{H_m\}$ be a universal Hamiltonian family, and $\{f_m\}$ the associated quantum model family, defined via Eq.~\eqref{eq:indexed_model}.  For all functions $g \in L_2([0,2\pi]^N)$, and for all $\epsilon > 0$, there exists some $m'\in \mathbb{N}$, some state $|\Gamma\rangle \in \mathbb{C}^{d^{m'}}$, and some observable $M$ such that
\begin{equation}
    \lvert\rvert f_{m'} - g \lvert\rvert_2 \leq \epsilon.
\end{equation}
\end{theorem*}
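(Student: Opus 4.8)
The plan is to reduce the statement to two independent sub-problems and glue them with the triangle inequality. The first is a purely classical approximation fact: every $g\in L_2([0,2\pi]^N)$ is approximated arbitrarily well, in $L_2$-norm, by a truncated multivariate Fourier series $g_K=\sum_{\vec{n}\in\{-K,\dots,K\}^N}\hat g_{\vec{n}}\,e^{i\vec{n}\cdot\vec{x}}$. The second is a realisability fact about the model family: for a large enough member $f_{m'}$, the free ingredients $\ket{\Gamma}$ and $M$ can be tuned so that $f_{m'}$ equals \emph{any} prescribed $g_K$ whose frequencies lie in the accessible spectrum. Granting both, the theorem is immediate: given $\epsilon>0$, first choose $K$ with $\lVert g-g_K\rVert_2\le\epsilon$; then invoke the universal-family property to pick $m'$ with $\bbZ_K\subseteq\Omega_{H_{m'}}$, so that each $\vec{n}\in\{-K,\dots,K\}^N$ is an accessible frequency of $f_{m'}$ componentwise; and finally realise $f_{m'}=g_K$ exactly, whence $\lVert f_{m'}-g\rVert_2=\lVert g_K-g\rVert_2\le\epsilon$.

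For the approximation step I would simply invoke completeness: $\{(2\pi)^{-N/2}e^{i\vec{n}\cdot\vec{x}}\}_{\vec{n}\in\bbZ^N}$ is an orthonormal basis of $L_2([0,2\pi]^N)$, so the partial sums $g_K$ converge to $g$ in norm by Parseval (this is elementary Hilbert-space convergence, weaker than the pointwise result of Carleson cited earlier). Since $f_{m'}$ is real-valued by construction, I would reduce to real-valued $g$ (the relevant regression setting), for which $\hat g_{-\vec{n}}=\hat g_{\vec{n}}^{*}$.

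The core is the realisability step. Diagonalising each encoding Hamiltonian and absorbing the diagonalising unitaries into $\ket{\Gamma}$ and $M$ as in Section~\ref{Sec:tool}, I may assume $H_{m'}$ is diagonal with eigenvalues $\lambda_1,\dots,\lambda_{d^{m'}}$; indexing the $N$-fold tensor basis by multi-indices $\vec{j}$ and writing $\vec{\lambda}_{\vec{j}}=(\lambda_{j_1},\dots,\lambda_{j_N})$, the model \eqref{eq:indexed_model} takes the form
\begin{equation}
    f_{m'}(\vec{x}) = \sum_{\vec{k},\vec{j}} \Gamma_{\vec{k}}^{*}\, M_{\vec{k}\vec{j}}\, \Gamma_{\vec{j}}\; e^{i(\vec{\lambda}_{\vec{k}}-\vec{\lambda}_{\vec{j}})\cdot\vec{x}},
\end{equation}
so the coefficient at frequency $\vec{\omega}$ is the sum of $\Gamma_{\vec{k}}^{*}M_{\vec{k}\vec{j}}\Gamma_{\vec{j}}$ over all pairs with $\vec{\lambda}_{\vec{k}}-\vec{\lambda}_{\vec{j}}=\vec{\omega}$. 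I would take $\ket{\Gamma}$ to be the uniform superposition, so every amplitude is nonzero, and then choose, once and for all, for each $n\in\{-K,\dots,K\}$ a pair of eigenvalue indices $(a_n,b_n)$ with $\lambda_{a_n}-\lambda_{b_n}=n$ (available precisely because $\bbZ_K\subseteq\Omega_{H_{m'}}$). To each target frequency $\vec{n}$ I associate the single multi-index pair $\vec{k}(\vec{n})=(a_{n_1},\dots,a_{n_N})$, $\vec{j}(\vec{n})=(b_{n_1},\dots,b_{n_N})$, which carries exactly the frequency $\vec{n}$. I then build $M$ from: one real diagonal entry fixing the real coefficient $c_{\vec 0}=\hat g_{\vec 0}$; for each $\vec{n}$ in a fixed ``positive'' half of $\{-K,\dots,K\}^N\setminus\{\vec 0\}$, the entry $M_{\vec{k}(\vec{n})\vec{j}(\vec{n})}=\hat g_{\vec{n}}/(\Gamma_{\vec{k}(\vec{n})}^{*}\Gamma_{\vec{j}(\vec{n})})$; its Hermitian conjugate; and zeros elsewhere.

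The step I expect to be the main obstacle is exactly this last construction, namely guaranteeing \emph{independent} control of the coefficients rather than merely nonzero access. The resolution I would stress is the observation that every activated off-diagonal entry of $M$ carries its own distinct frequency, so the contributions to distinct frequencies come from disjoint matrix elements and cannot contaminate one another; consequently $c_{\vec{n}}=\Gamma_{\vec{k}(\vec{n})}^{*}M_{\vec{k}(\vec{n})\vec{j}(\vec{n})}\Gamma_{\vec{j}(\vec{n})}=\hat g_{\vec{n}}$, while Hermiticity of $M$ forces $c_{-\vec{n}}=c_{\vec{n}}^{*}=\hat g_{-\vec{n}}$ and leaves every frequency outside $\{-K,\dots,K\}^N$ with coefficient zero. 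Hence $f_{m'}=g_K$ exactly, and the uniform choice of $\ket{\Gamma}$ ensures all required amplitudes are nonzero so that each $M$-entry can indeed be solved for. The remaining care is purely bookkeeping: checking that the conjugate-pair and $\vec 0$-frequency cases are handled consistently with $M$ Hermitian and with $g$ real.
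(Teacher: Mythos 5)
Your proposal is correct and takes essentially the same route as the paper's own proof: approximate $g$ by a truncated Fourier series, invoke the universal-family property to pick $m'$ with $\mathbb{Z}_K^N$ inside the accessible multivariate spectrum, fix $\ket{\Gamma}$ as the uniform superposition so all amplitudes are nonzero, and then encode each target Fourier coefficient in its own matrix element of $M$, letting Hermiticity supply the conjugate frequencies. Your explicit componentwise choice of one multi-index pair per frequency, restricted to a positive half of the frequency lattice, is a concrete instantiation of the paper's one-pair-per-frequency index set $I$, and your disjointness argument (distinct activated entries carry distinct frequencies, and a designated entry can only collide with the conjugate entry of its negated frequency) actually handles the Hermiticity bookkeeping more carefully than the paper's statement of the construction does.
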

A full proof is given in Appendix \ref{app:thm_proof}, however in the following we will provide a sketch of the proof in order to give an outline of the ideas and techniques. 
The proof begins by noting that any square-integrable function $g$ on a finite interval can be approximated by a truncated Fourier series to arbitrary precision. We therefore reduce the task to finding a quantum model for this truncated Fourier series. The universality property of the Hamiltonian family implies that the multivariate models we consider can express all necessary frequencies to perform that approximation. We then show how to use the freedom in choosing the initial state and the observable to reproduce the truncated Fourier series of $g$ exactly, leading to an approximation by a quantum model with arbitrary precision.

Note that the statement that there exists some state $|\Gamma\rangle$ and some observable $M$ is equivalent to the statement that the target function can be learned by the relevant model (under the assumption the trainable circuit blocks are sufficiently flexible). As any frequency spectrum is asymptotically accessible, due to the assumption of a universal Hamiltonian family, the universality theorem is essentially equivalent to the statement that with sufficiently flexible circuit blocks such quantum models can realise any set of Fourier \textit{coefficients}.

\section{Practical implications for quantum machine learning}\label{Sec:implications}

In this last section, we discuss the scope and practical relevance of our results for quantum machine learning. First, we motivate that many quantum models proposed in the literature that do not immediately fit the base model from Eq. \eqref{eq:model} can still be analysed within our framework under the assumption that they encode classically pre-processed features $\boldsymbol \phi(\boldsymbol{x})$ instead of the original features $\boldsymbol{x}$. Second, we summarise guidelines that can help with the design of quantum machine learning algorithms.

\subsection{Classical pre-processing} \label{Sec:hybrid}

The base model used in this paper makes the assumption that a data feature is encoded into a subroutine $S(x)$ which consists of gates $\mathcal{G}(x) = e^{-ix H}$. We motivate in this section that many quantum machine learning algorithms which use other strategies of data encoding actually perform an implicit pre-processing of the data, and then use the ``time-evolution'' encoding studied here. The results of this paper are hence valid for the new features resulting from the pre-processing step. 

For example, the standard encoding procedure of traditional~\cite{nielsen2002quantum} (and some NISQ~\cite{farhi2018classification}) quantum algorithms, associates the $n$-bit binary representation of each (scalar) input feature $x$ with an $n$-qubit basis state, such as $x \mapsto \ket{0 1 0 1 1}$. The pre-processing step therefore maps original features $x$ to the angles $\boldsymbol \phi (x) = (\phi_1(x), \dots, \phi_n(x))$ with which the $n$ qubits have to be rotated to reflect every binary decimal digit of $x$ (i.e., $\pi$ for $\ket{1}$ or $0$ for $\ket{0}$). Our investigation here states that the quantum model for a single input feature corresponds to a multi-dimensional Fourier series in the angles, with a frequency spectrum size of at most $n$. In other words, the pre-processing changed the accessible Fourier spectrum by changing the features.

Another example is so-called ``amplitude encoding'' (i.e., \cite{rebentrost2014quantum, schuld2020circuit}), which associates an input vector $\bx$ with the values of the amplitudes of a quantum state. Practically, this requires $S(\bx)$ to be an arbitrary state preparation routine that is parametrised by some angles computed from $\bx$. The classical pre-processing therefore maps the original input to the set of angles used in the state preparation, $\bx \mapsto \boldsymbol \phi (\bx)$.

Pre-processing is also sometimes used in encoding strategies that directly feed input features into Pauli rotations. One example was used in Figs.~\ref{fig:fitting} and \ref{fig:L1-5}, where we re-scaled the inputs by a classical hyperparameter. In Ref.~\cite{perez2020data} it has been proposed to make these hyperparameters trainable (which in the light of the present analysis would allow for an adaptive ``frequency matching'' and may help to increase the expressivity of small quantum circuits). Another example is to construct higher-order features that are arithmetic combinations of the original inputs, like $\phi_1(\boldsymbol{x}) = x_1x_2, \phi_2(\boldsymbol{x}) = x_2x_3, \dots$, as used in the quantum feature map proposed in Ref.~\cite{havlivcek2019supervised}.

These examples suggest that implicit pre-processing can extend the function classes that quantum models can learn even further. However, care needs to be taken when making theoretical claims about the power of a quantum machine learning algorithm, which is, strictly speaking, a result of the quantum algorithm \textit{plus} the specific pre-processing strategy. In particular, comparisons to classical machine learning models should identify the pre-processing strategy and consider feeding the same pre-processed features to the classical model.

\subsection{Practical insights}

Finally, we want to summarise how the results of this paper can be used to understand and evaluate different design decisions of quantum machine learning models:

\emph{1.}\hspace{1em}If data is encoded via a Hamiltonian time evolution, we can naturally describe the class of functions that quantum models can learn as partial Fourier series. The Hamiltonian defines the available frequencies in the series, and the gates that do not encode data define the Fourier coefficients.

\emph{2.}\hspace{1em}If data is encoded into single-qubit Pauli rotations, the number of rotations used limits the number of frequencies that the model has access to. Repeating an encoding gate can help to increase the frequency spectrum, and thereby the expressivity of a quantum model.

\emph{3.}\hspace{1em}Quantum models naturally learn periodic functions in the data. One should therefore consider appropriate data re-scaling strategies, to make sure the data lies within the period of the function class. The natural representation of quantum models as Fourier series may suggest that time-series learning and signal processing tasks are particularly suitable applications for quantum machine learning. It may also hint at inherent regularising properties of quantum models that exclude higher-order Fourier frequencies.

\emph{4.}\hspace{1em}Classical pre-processing of the data, such as creating more features, can give small models more expressivity by enriching the frequency spectrum. 

\emph{5.}\hspace{1em}Adjusting the entries of the observable freely was a key ingredient in proving universality of quantum circuits in Section~\ref{Sec:universality}. Fixing the observable in a quantum model therefore limits its applicability. This fact suggests that parametrising the observable itself may be a key ingredient for flexible quantum models.

Ideally, one would hope that our results could provide concrete guidelines for the design of quantum machine learning models. However, in practical settings the process of model selection should be guided not purely by model expressivity, but rather through the expected generalisation performance of the model function class, as captured by capacity metrics such as the VC-dimension or Rademacher complexity \cite{shalev2014understanding}. While such capacities can be calculated for very simple function classes, calculating such metrics for more complex model classes, such as the quantum models studied here, is significantly harder. Additionally, in modern over-parametrised  models, which can often fit even randomised training data perfectly \cite{zhang2016understanding}, more sophisticated approaches are necessary to understand generalisation capacity \cite{jiang2019fantastic}. In light of this, the insights on how to make models more expressive should not be misinterpreted as recommendations for how to design \textit{good} quantum models -- a question which is much more complex and whose answer depends strongly on the context.

\section{Conclusion}

In this work we presented a systematic mapping between a large class of quantum machine learning models and partial Fourier series, which has allowed us to explore and quantify the effect of commonly used data-encoding mechanisms on the expressivity of these quantum models. We believe that this framework both lays a foundation for further theoretical analysis, and can serve as a useful guide in the search for suitable applications of such models. Additionally, this work provides a connection between quantum machine learning and ideas from the classical machine learning literature, such as neural networks with periodic activation functions \cite{sitzmann2020implicit}, and parametrised Fourier series as an alternative to neural networks \cite{zhumekenov2019fourier, wahls2014learning}. 

As mentioned throughout the paper, a variety of interesting questions remain. Firstly, can the framework developed here help us to understand and quantify the generalisation capacity of quantum models, and therefore guide model selection in a meaningful way? In particular, by using the representation of a quantum model as a partial Fourier series, can one calculate meaningful modern generalisation measures \cite{jiang2019fantastic} and use these for the development of model-selection guidelines? Secondly, we have proven our universality result under the assumption of exponential depth trainable circuit blocks (which provides a reasonable notion of \textit{asymptotic} universality with respect to circuit depth). In practical settings however one is interested in trainable circuit blocks with depth restrictions. Can one prove universality of such quantum models with either constant, logarithmic or polynomial depth trainable circuit blocks? In order to answer this question our toolbox needs to be developed further to understand how the structure of the trainable circuit blocks influences the set of accessible Fourier coefficients. Finally, it is currently unclear for which concrete applications quantum models may be naturally suited, or offer any sort of advantage over classical techniques, such as neural networks. Another question is therefore whether one can use knowledge of the function class expressed by quantum models, as developed in this work, to suggest natural applications for quantum machine learning.

\section*{Code} Code to reproduce the figures and explore further settings can be found in the following GitHub repository: \url{https://github.com/XanaduAI/expressive_power_of_quantum_models}.

\section*{Acknowledgements} MS wants to thank Nathan Killoran, Nicolas Quesada and Josh Izaac for helpful discussions. RS and JJM acknowledge funding from the BMWi under the PlanQK initiative. The authors endorse Scientific CO\textsubscript{2}nduct \cite{conduct} and provide a CO\textsubscript{2} emission table in Appendix~\ref{sec:appconduct}.

\bibliography{lit}

\clearpage
\appendix
\onecolumngrid

\section{Partial Fourier Series Representation of Multivariate Functions}\label{app:multivariate_FS}
In this section we show how a certain class of $L=1$ quantum models naturally realise multivariate Fourier series. On the one hand, this shows a way in which the univariate case analysed in the paper can easily generalise to multivariate models by encoding the features into different quantum subsystems. On the other hand, the multivariate model described in this section is a quantum model whose asymptotic universality is stated and discussed in Section \ref{Sec:universality}, and proven in Appendix~\ref{app:thm_proof}. 

More specifically, we consider a quantum model of the form
\begin{align}
    f(\vec{x}) = \langle 0 | \left( (W^{(1)})^{\dagger}S^{\dagger}(\vec{x}) (W^{(2)})^{\dagger}\right)M\left(W^{(2)}S(\vec{x}) W^{(1)}\right) |0\rangle,
\end{align}
where 
\begin{equation}
    S(\vec{x}) := e^{-i x_1 H_1}\otimes\ldots\otimes e^{-i x_N H_N}.
\end{equation}
Without loss of generality, instead of explicitly considering arbitrary unitaries $W^{(1)}$ and $W^{(2)}$, we can ``absorb" the unitaries into the initial state and measurement and consider the equivalent model
\begin{align}
    f(\vec{x}) = \langle \Gamma | S^{\dagger}(\vec{x}) MS(\vec{x}) |\Gamma\rangle,
\end{align}
where 
\begin{align}
    |\Gamma\rangle &= \sum_{j_1,\ldots,j_N = 1}^{2^d}\gamma_{j_1,\ldots,j_N}^{\vphantom{*}}|j_1\rangle\otimes\ldots\otimes|j_N\rangle
\end{align}
is some arbitrary state, and $M$ is some arbitrary observable. To simplify the index handling, we introduce the multi-indices $\jj \in [2^d]^N$ with which we can rewrite 
\begin{align}
    |\Gamma\rangle :=\sum_{\vec{j}}^{\vphantom{*}}\gamma_{\vec{j}}|\vec{j}\rangle.
\end{align}
Additionally, as argued before we can without loss of generality assume that all Hamiltonians are diagonal, i.e., that
\begin{equation}
    H_k = \mathrm{diag}(\lambda^{(k)}_1,\ldots, \lambda^{(k)}_{2^d}).
\end{equation}
With this assumption, we note that $S(\vec{x})$ is diagonal with entries
\begin{equation}
    [S(\vec{x})]_{\vec{j},\vec{j}} = e^{-i\vec{x}\cdot \vec{\lambda}_{\vec{j}}},
\end{equation}
where we have defined
\begin{equation}
    \vec{\lambda}_{\vec{j}} = (\lambda^{(1)}_{j_1},\ldots,\lambda^{(N)}_{j_N}).
\end{equation}
Given this, we see that
\begin{align}
        f(\vec{x}) &= \sum_{\vec{j}}\sum_{\vec{k}} \gamma^*_{\vec{j}}\gamma_{\vec{k}}^{\vphantom{*}}[S^{\dagger}(\vec{x})MS(\vec{x})]_{\vec{j},\vec{k}}^{\vphantom{*}} \\
        &= \sum_{\vec{j}}\sum_{\vec{k}} \gamma^*_{\vec{j}}\gamma_{\vec{k}}^{\vphantom{*}}M_{\vec{j},\vec{k}}^{\vphantom{*}}e^{i\vec{x}\cdot(\vec{\lambda}_{\vec{k}} - \vec{\lambda}_{\vec{j}})},
        \label{eq:multivariate_FS}
\end{align}
which is indeed a partial multivariate Fourier series, with the accessible frequencies fully determined by the spectra of the encoding Hamiltonians $\{ H_k \}$, and the Fourier coefficients determined by the trainable unitaries (or equivalently, the state and observable).

\section{Non-integer frequencies}\label{app:non_integer_frequencies}
In the main text, we put our focus on quantum models with integer-valued frequency spectra, as they naturally arise when using Pauli rotation gates and allow for analysis with the techniques of Fourier series. Here we will briefly discuss why many quantum models with non-integer-valued frequency spectra can be treated similarly. 

First, note that we can always decompose functions of the form $e^{i \omega x}$ into a Fourier series of integer-valued frequencies, i.e.,
\begin{align}
    e^{i \omega x} = \sum_{n=-\infty}^{\infty} \frac{(-1)^n \sin \omega \pi}{(\omega - n) \pi} e^{inx}
    = \sum_{n=-\infty}^{\infty} \operatorname{sinc}( \omega -n )e^{inx},
\end{align}
with $\operatorname{sinc}(z) = \sin(\pi z) / \pi z$. However, as we can see from this expression, any non-integer frequency in general ``contributes'' to \emph{infinitely many} Fourier coefficients. It turns out that a rather general case of quantum models with non-integer frequencies can be handled equivalently, namely if the frequencies are integer multiples of some basic frequency $\omega_0$,
\begin{align}
    \Omega = \{0, \pm n_1 \omega_0, \pm n_2 \omega_0, \dots \}.
\end{align}
This condition is equivalent to all frequencies in $\Omega$ being mutually commensurable, i.e., the ratio of any two frequencies is a rational number. This is the case in many natural settings, for example if encodings with non-integer frequencies are repeated in parallel alike to the Pauli encodings in Section~\ref{Sec:expressivity}. 

Basis functions of the form $e^{i x n \omega_0}$ are periodic functions on the interval $[0, 2\pi/\omega_0]$. This means that the generated Fourier-type sum in Eq.~\eqref{eq:fourier_sum} can be understood like the partial Fourier series in Eq.~\eqref{eq:partial_fourier_series}, but on a different interval. Alternatively, one can imagine re-scaling the data by $\tilde{x} =  x/\omega_0$, with which
\begin{align}
    e^{i \frac{x}{\omega_0} \omega } = e^{i \tilde{x} n}.
\end{align}

While this strategy could in principle be applied to any frequency spectrum where the frequencies are mutually commensurable, one has to be aware that $\omega_0$ is as least as small as the smallest difference between frequencies in $\Omega$. If very close frequencies are present in the spectrum, the data will have to be re-scaled by a very large factor to an interval where the generated Fourier coefficients may be sparse and the approximation quality poor.

\section{Proof of the universality theorem}\label{app:thm_proof}
We provide in this section a proof of the universality theorem stated in Section \ref{Sec:universality}, which we restate for completeness:
\begin{theorem*}\label{thm:universality_app}
Let $\{H_m\}$ be a universal Hamiltonian family, and $\{f_m\}$ the associated quantum model family, defined via Eq.~\eqref{eq:indexed_model}.  For all functions $g \in L_2([0,2\pi]^N)$, and for all $\epsilon > 0$, there exists some $m'\in \mathbb{N}$, some state $|\Gamma\rangle \in \mathbb{C}^{d^{m'}}$, and some observable $M$ such that
\begin{equation}
    \lvert\rvert f_{m'} - g \lvert\rvert_2 \leq \epsilon.
\end{equation}
\end{theorem*}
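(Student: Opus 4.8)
The plan is to follow the three-step reduction indicated after the theorem: approximate $g$ by a trigonometric polynomial, select a member of the Hamiltonian family rich enough to host all of its frequencies, and then realise that trigonometric polynomial \emph{exactly} as a quantum model by a suitable choice of state and observable. First I would invoke the completeness of the Fourier basis $\{e^{i\vec{n}\cdot\vec{x}}\}_{\vec{n}\in\mathbb{Z}^N}$ in $L_2([0,2\pi]^N)$: given $g$ and $\epsilon>0$, there is a degree $K$ and a truncated Fourier series $g_K(\vec{x}) = \sum_{\vec{n}\in\mathbb{Z}_K^N}\hat{g}_{\vec{n}}\,e^{i\vec{n}\cdot\vec{x}}$ with $\lvert\rvert g-g_K\lvert\rvert_2\le\epsilon$. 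Since $f_{m}$ is real-valued, I take $g$ real, so that $\hat{g}_{-\vec{n}}=\overline{\hat{g}_{\vec{n}}}$; this Hermiticity of the coefficients is precisely what the construction below will require. Using the universal-Hamiltonian-family hypothesis I then fix $m'$ with $\mathbb{Z}_K\subseteq\Omega_{H_{m'}}$, so that every component of every target frequency vector is a difference of two eigenvalues of $H_{m'}$. It then suffices to produce $|\Gamma\rangle$ and $M$ with $f_{m'}=g_K$ identically, since then $\lvert\rvert f_{m'}-g\lvert\rvert_2=\lvert\rvert g_K-g\lvert\rvert_2\le\epsilon$.

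The heart of the argument is this exact realisation of $g_K$, and here I would lean on the Fourier representation derived in Appendix~\ref{app:multivariate_FS}, namely $f_{m'}(\vec{x})=\sum_{\vec{j},\vec{k}}\gamma_{\vec{j}}^{*}\gamma_{\vec{k}}M_{\vec{j},\vec{k}}\,e^{i\vec{x}\cdot(\vec{\lambda}_{\vec{k}}-\vec{\lambda}_{\vec{j}})}$, working in the diagonal eigenbasis of the encoding Hamiltonians. The key observation is that each ordered index pair $(\vec{j},\vec{k})$ contributes to \emph{exactly one} frequency vector $\vec{\lambda}_{\vec{k}}-\vec{\lambda}_{\vec{j}}$, so the coefficient-matching problem decouples completely across frequencies. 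Concretely, for each target $\vec{n}\in\mathbb{Z}_K^N$ I would, component by component, pick eigenvalue labels $j_\ell,k_\ell$ of $H_{m'}$ with $\lambda_{k_\ell}-\lambda_{j_\ell}=n_\ell$ (possible because $n_\ell\in\Omega_{H_{m'}}$), assemble the multi-indices $\vec{j}(\vec{n})$ and $\vec{k}(\vec{n})$, and choose the representative pair for $-\vec{n}$ to be the reversal of the one for $\vec{n}$. Taking $|\Gamma\rangle$ to be the uniform superposition $\tfrac{1}{\sqrt{|\mathcal{A}|}}\sum_{\vec{s}\in\mathcal{A}}|\vec{s}\rangle$ over the finite set $\mathcal{A}$ of all multi-indices so used, I would then set $M_{\vec{j}(\vec{n}),\vec{k}(\vec{n})}=|\mathcal{A}|\,\hat{g}_{\vec{n}}$ on these representative entries (a single diagonal entry handling $\vec{n}=\vec{0}$) and $M=0$ on every other entry.

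Because $\gamma_{\vec{j}}^{*}\gamma_{\vec{k}}=1/|\mathcal{A}|$ on the support, each nonzero entry contributes precisely $\hat{g}_{\vec{n}}$ to its frequency, the choice $M_{\vec{k}(\vec{n}),\vec{j}(\vec{n})}=\overline{M_{\vec{j}(\vec{n}),\vec{k}(\vec{n})}}$ makes $M$ Hermitian and reproduces $\hat{g}_{-\vec{n}}=\overline{\hat{g}_{\vec{n}}}$, and all remaining index pairs carry coefficient zero. Hence the series collapses to $\sum_{\vec{n}\in\mathbb{Z}_K^N}\hat{g}_{\vec{n}}e^{i\vec{n}\cdot\vec{x}}=g_K$ with no spurious contributions, completing the construction.

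The Fourier-approximation bound and the verification that $M$ is Hermitian and $|\Gamma\rangle$ normalised are routine. The step I expect to need the most care — the ``main obstacle'' — is guaranteeing that distinct target frequencies never interfere: one must check that two representative ordered pairs can coincide only when their frequency vectors coincide, and that the reversal convention aligns the $\vec{n}$ and $-\vec{n}$ entries without clashing with any third frequency. Both facts follow from the observation that an ordered pair determines its frequency uniquely, but it is exactly this injectivity at the level of pairs that legitimises concentrating each Fourier coefficient on a single matrix entry, and it is what must be argued carefully to turn the sketch into a rigorous proof.
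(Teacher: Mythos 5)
Your proposal is correct and follows essentially the same route as the paper's proof: approximate $g$ by a truncated Fourier series, invoke the universal Hamiltonian family to fix $m'$ with $\mathbb{Z}_K \subseteq \Omega_{H_{m'}}$, prepare a uniform-amplitude state, and encode each Fourier coefficient in a single observable entry chosen from one representative index pair per frequency, with Hermiticity of $M$ enforcing the conjugate symmetry. The only cosmetic difference is that you superpose over just the used multi-indices $\mathcal{A}$ rather than over all basis states as the paper does, and your explicit injectivity argument (an ordered pair determines its frequency) is exactly the fact underlying the paper's choice of the set $I$.
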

\begin{proof}
To begin with, we note that we can approximate any given $g \in L_2([0,2\pi]^N)$, up to an arbitrarily small error in $L_2$ norm, by using a truncated Fourier series~\cite{weisz2012summability}. More specifically, for any given $\epsilon > 0$, there exists some $K \in \mathbb{N}$ and some set of coefficients $\{c_{\vec{n}}\,|\, \vec{n}\in \mathbb{Z}^N_{K}\}$, with $c_{\vec{n}} = c^*_{-\vec{n}}$, such that
\begin{align}
    \tilde{g}(\vec{x}) &= \sum_{n_1 = -K}^{K}\ldots \sum_{n_N = -K}^{K} c_{\vec{n}}e^{i\vec{x}\cdot\vec{n}} \\&:= \sum_{\vec{n} \in \mathbb{Z}^N_{K} }c_{\vec{n}}e^{i\vec{x}\cdot\vec{n}}
\end{align}
satisfies
\begin{equation}
   \lvert\rvert \tilde{g} - g\lvert\rvert \leq \epsilon.
\end{equation}

In order to prove the theorem we therefore only need to show that there exists an $m' \in \bbN$, some state $|\Gamma\rangle$ and some observable $M$ so that the associated quantum model $f_{m'}$ generates the Fourier series $\tilde{g}$. 
Recall that the quantum model was defined as
\begin{align}
    f_m(\vec{x}) = \langle \Gamma | S^{\dagger}_{H_m}(\vec{x}) MS_{H_m}(\vec{x}) |\Gamma\rangle,
\end{align}
with 
\begin{equation}
    S_{H_m}(\vec{x}) := e^{-i x_1 H_m}\otimes\ldots\otimes e^{-i x_N H_m}.
\end{equation}
In Appendix~\ref{app:multivariate_FS}, we have seen that we can express the output of the model as
\begin{align}\label{eqn:model_expression}
    f_m(\vec{x}) &= \sum_{\vec{j}}\sum_{\vec{k}} \gamma^*_{\vec{j}}\gamma_{\vec{k}}^{\vphantom{*}}M_{\vec{j},\vec{k}}^{\vphantom{*}}e^{i\vec{x}\cdot(\vec{\lambda}_{\vec{k}} - \vec{\lambda}_{\vec{j}})},
\end{align}
where the multi-indices $\jj$ and $\vec{k}$ have $N$ entries that iterate over all $2^d$ basis states of the $d$ qubit subsystems. Let $\Omega_{H_m}$ be the frequency spectrum of $H_m$, as defined in Eq. \eqref{eq:freq_spec_ham}. As the $\{H_m\}$ form a universal family of Hamiltonians by assumption, we can choose an $m' \in \bbN$ so that 
\begin{align}
    \bbZ_K = \{-K, \dots, 0, \dots, K \} \subseteq \Omega_{H_{m'}}.
\end{align}
The accessible frequency vectors $\llambda_{\jj} - \llambda_{\vec{k}}$ independently contain all possible combinations of the frequencies in $\Omega_{H_{m'}}$. The vector-valued frequency spectrum for the multivariate case is therefore the Cartesian product of $N$ copies of $\Omega_{H_{m'}}$:
\begin{align}
    \Omega = \underbrace{\Omega_{H_{m'}} \times \dots \times \Omega_{H_{m'}}}_{N\text{ times}}.
\end{align}
As $\bbZ_K \subseteq \Omega_{H_{m'}}$ we naturally have that $\bbZ_K^N \subseteq \Omega$, which means that the Fourier series generated by the chosen model contains all terms that are necessary to construct the Fourier series $\tilde{g}$.

We can now revisit Eq.~\eqref{eqn:model_expression} and show that we can leverage the freedom of choosing both the initial state $\ket{\Gamma}$ and the observable $M$ arbitrarily to adjust all terms in the sum of Eq.~\eqref{eqn:model_expression} freely up to the complex-conjugation symmetry that guarantees that the model output is a real-valued function. To this end, we first observe that an exchange of the multi-indices $\jj$ and $\vec{k}$ yields the complex conjugate of the original term:
\begin{align}
    \left[\gamma^*_{\vec{j}}\gamma_{\vec{k}}^{\vphantom{*}} M_{\vec{j},\vec{k}}^{\vphantom{*}} e^{i\vec{x}\cdot(\vec{\lambda}_{\vec{k}} - \vec{\lambda}_{\vec{j}})}\right]^{*} &=
    \gamma^*_{\vec{k}}\gamma_{\vec{j}}^{\vphantom{*}} M_{\vec{j},\vec{k}}^{*} e^{i\vec{x}\cdot(\vec{\lambda}_{\vec{j}} - \vec{\lambda}_{\vec{k}})} \\
    &=
    \gamma^*_{\vec{k}}\gamma_{\vec{j}}^{\vphantom{*}} M_{\vec{k},\vec{j}}^{\vphantom{*}} e^{-i\vec{x}\cdot(\vec{\lambda}_{\vec{k}} - \vec{\lambda}_{\vec{j}})}.
\end{align}
Other than that, the coefficients can be freely chosen. To this end, we fix our initial state as the equal superposition state which can be prepared by applying a Hadamard gate to every qubit in the system. This gives $\gamma_{\jj} = 1/\sqrt{2^{Nd}}$ and results in the model
\begin{align}
    f_{m'}(\vec{x}) &= 2^{-Nd} \sum_{\vec{j}}\sum_{\vec{k}} M_{\vec{j},\vec{k}}^{\vphantom{*}} e^{i\vec{x}\cdot(\vec{\lambda}_{\vec{k}} - \vec{\lambda}_{\vec{j}})}.
\end{align}
With this choice, we see that the coefficients are directly proportional to the different entries of the observable $M$. Recall that our initial goal was to construct the Fourier series $\tilde{g}$ with coefficients $\{ c_{\nn}\}$ where $\nn \in \bbZ_K^N$. We already argued that all those are accessible in the frequency spectrum of our model because of the universal nature of the Hamiltonian family $\{ H_m \}$. As any frequency corresponds to one or more pairings of multi-indices $\jj$ and $\vec{k}$, we can always select a set of these multi-indices such that it is in one to one correspondence with the frequencies present in the Fourier series $\tilde{g}$:
\begin{align}
    I = \{ (\jj, \vec{k}) \in [2^d]^N \times [2^d]^N \, | \, \text{for all } \nn \in \bbZ_K^N \text{ there is exactly one pair } (\jj, \vec{k}) \text{ so that } \llambda_{\jj} - \llambda_{\vec{k}} = \nn \}.
\end{align}
With this it is now straightforward to use the freedom to choose our observable to fix $f_{m'} = \tilde{g}$ by choosing the diagonal and upper-triangular elements of $M$ via
\begin{align}
    M_{\jj, \vec{k}}^{\vphantom{*}} = \begin{cases}
        2^{Nd} c_{\nn} \text{ if } \llambda_{\jj} - \llambda_{\vec{k}} = \nn \text{ and } (\jj, \vec{k}) \in I \\
        0 \text{ otherwise}
    \end{cases}
\end{align}
after which the lower-triangular elements are fixed by the constraint that the observable is Hermitian.
\end{proof}

\newpage
\section{CO\textsubscript{2} Emission Table}\label{sec:appconduct}
\begin{table}[h]
\label{tab:cotwo}
\begin{tabular}[b]{l c}
\toprule
\textbf{Numerical simulations} & \\
\midrule
Total Kernel Hours [$\mathrm{h}$]& $\approx 100$ \\
Thermal Design Power Per Kernel [$\mathrm{W}$]& $\approx 50$\\
Total Energy Consumption Simulations [$\mathrm{kWh}$] & $\approx 5$ \\
Average Emission Of CO$_2$ In South Africa [$\mathrm{kg/kWh}$]& $\approx 1.5$\\
Total CO$_2$ Emission For Numerical Simulations [$\mathrm{kg}$] & $\approx 7.5$ \\
\midrule
\textbf{Transport} & \\
\midrule
Total CO$_2$ Emission For Transport [$\mathrm{kg}$] & 0\\
\midrule
Total CO$_2$ Emission [$\mathrm{kg}$] & $\approx 7.5$ \\
Were The Emissions Offset? & Yes \\
\bottomrule
\end{tabular}
\end{table}

\end{document}